\newtheorem{lemma}{Lemma}[section]
\newtheorem{proposition}[lemma]{Proposition}
\newtheorem{conjecture}[lemma]{Conjecture}
\newtheorem{theorem}[lemma]{Theorem}
\theoremstyle{definition}
\theoremstyle{remark}
\providecommand{\keywords}[1]{\textbf{\text{Keywords:}} #1}
\begin{document}

\title{Multicast Network Design Game on a Ring}

\author[1]{Akaki Mamageishvili}
\author[2]{Mat\'u\v{s} Mihal\'ak}

\affil[1]{Department of Computer Science, ETH Zurich, Switzerland}
\affil[2]{Department of Knowledge Engineering, Maastricht University, The Netherlands}
\date{July, 2015}

\maketitle

\begin{abstract}

In this paper we study quality measures of different solution concepts for the multicast network design game on a ring topology.
We recall from the literature a lower bound of $\frac{4}{3}$ and prove a matching upper bound for the price of stability, which is the ratio of the social costs of a best Nash equilibrium and of a general optimum. Therefore, we answer an open question posed by Fanelli et al. in \cite{ring-design-journalversion}. 
We prove an upper bound of $2$ for the ratio of the costs of a potential optimizer and of an optimum, provide a construction of a lower bound, and give a computer-assisted argument that it reaches $2$ for any precision. 
We then turn our attention to players arriving one by one and playing myopically their best response. We provide matching lower and upper bounds of $2$ for the myopic sequential price of anarchy (achieved for a worst-case order of the arrival of the players).
We then initiate the study of myopic sequential price of stability and for the multicast game on the ring we construct a lower bound of $\frac{4}{3}$, and provide an upper bound of $\frac{26}{19}$. 
To the end, we conjecture and argue that the right answer is $\frac{4}{3}$.
%
%We focus on equilibria that minimize the underlying potential function of the
%game, and observe and exploit their properties to obtain the new upper bound.
%

\keywords{Network design game; Nash equilibrium; Price of stability/anarchy; Ring topology; Myopic sequential price of stability/anarchy; Potential-optimum price of stability/anarchy}
  %, fair cost-allocation}
\end{abstract}

%\keywords{Network design game, Nash equilibrium, Price of Stability}
%\begin{keywords} Network design game, Nash equilibrium, Price of Stability \end{keywords}

%%%%%%%%%%%%%%%%%%%%%%%%%%%%%%%%%%%%%%%%%%%%%%%%%%%%%%%%%%%%%%%%%%%%%%%%%%%%%%%%
%%%   Introduction
%%%%%%%%%%%%%%%%%%%%%%%%%%%%%%%%%%%%%%%%%%%%%%%%%%%%%%%%%%%%%%%%%%%%%%%%%%%%%%%%
%
\section{Introduction}

Network design game is played by $n$ players on an edge-weighted graph. Each player $i$, $i=0,\ldots,n-1$, connects her terminal vertices $s_i$ and $t_i$ by selecting an $s_i$-$t_i$ path $P_i$. Using an edge $e$ costs $c_e$ and all players using it share the cost equally. 
In total, player $i$'s cost for using path $P_i$ is the sum of all shares towards the edges of $P_i$.

Network design game belongs to the broader class of congestion games. It is a special congestion game in that increasing the congestion on a resource makes it cheaper to use (in contrast to the more established and studied games with monotone increasing cost functions).
Finite congestion games are exact potential games, i.e., games for which a potential function exists, i.e., a function $\Phi(P_0,\ldots,P_{n-1}) \rightarrow \mathbb{R}$ that exactly reflects the difference in any player's cost, if this unilaterally changes her path from $P_i$ to $P_i'$.
It is well-known that exact potential games always possess a pure Nash equilibrium, for example the vector $(P_0,\ldots,P_{n-1})$ minimizing the potential function $\Phi$.
The price of anarchy is the ratio of the worst Nash equilibrium cost and the general optimum cost, and can be as large as $n$. The price of stability, which is the ratio of the best Nash equilibrium cost and the general optimum cost, of network design games is well understood for directed graphs -- it is at most $H_n$ \cite{original}, where $H_n = 1 + 1/2 + 1/3 + \cdots + 1/n$ is the $n$-th harmonic number (equal asymptotically to $\log n$) and the matching lower bound example has also been constructed.
The price of stability of the game is much less understood for undirected graphs. While it is known to be strictly smaller than $H_n$~\cite{Matus,mfcs2014}, namely, at most $H_{n/2}$~\cite{mfcs2014}, the largest known example has price of stability equal to roughly 2.245~\cite{lowerbounds}.
Closing this gap is a major open problem in the field of congestion games and in the computational game theory in general.

For the special type of the game where all players have the same target vertex $t$, better bound on the price of stability has been proven~\cite{multicast}. If additionally each vertex of a graph is a source vertex of some player, a series of papers improved the upper bound ~\cite{Fiat+etal/2006,Lee+Ligett/2013,constant}, where the latest result of Bil\`o et al.\,\cite{constant} shows that price of stability is $O(1)$ in this class of games.
In many results, the potential function, and the equilibria minimizing it, play an important role. 
Actually, equilibria minimizing potential function are regarded as stable (against noise) by Asadpour and Saberi~\cite{Asadpour+Saberi/2009}, and accordingly, some authors studied the price of stability restricted to these kind of equilibria~\cite{popos,mfcs2014}, the so-called \emph{potential-optimum price of stability}. 

One of the motivations to study best Nash equilibria is that they can be regarded as outcomes of the game if a little coordination is present -- an authority that suggests the players the strategies $P_i$. Then, players have no incentive to unilaterally deviate from the suggested strategy profile. 
It is questionable whether such an authority exists -- it would need to be very strong, both computationally and imperatively. 
To address this applicability issue of equilibrium concepts, sequential versions of the game were studied: the players arrive one by one, and upon arrival, player $i$ chooses \emph{myopically} the best path $P_i$ as if this was the end of the game (i.e., no further players would arrive). Chekuri et al.\,\cite{sequentialOriginal} show that the total cost achieved by a worst-case permutation of the arriving players is at most $O(\sqrt{n}\log n)$ times the optimum cost. 
Subsequently, Charikar et al.\,\cite{sequentialImproved} improved this bound to $O(\log^2 n)$ (the original version~\cite{sequentialImproved} is erroneous, but the authors provide corrected arguments upon request).  
The worst-case approach to the order in which the players arrive naturally models the complete lack of coordination. 
In this paper, we suggest to study also the best-case order in which players arrive. This is motivated by the presence of an authority that can control the access to the resources over time (and thus decide an order of the arriving players). Such an authority is arguably weaker than the one mentioned above, as it does not impose any decision upon the players, and it leaves them to decide their strategies freely upon arriving. 
Bil\'{o} et al. \cite{graphical-cost-sharing} studied a version of a cost sharing scheme for multicast network design game, in which each player only knows strategies of some other players, and pays fair share of edge costs that she uses based only on her information. Sequential versions described above can be modeled with this cost sharing scheme. 

In this paper, we focus on one specific network topology: the ring. This is a fundamental topology in networking and communications. It is the edge-minimal topology that is resistant against a single link fault. From the decentralized point of view, \emph{call control} comes close in spirit to network design games, in that the connecting $s_i$-$t_i$ paths needs to be chosen to obey given capacities on the links~\cite{CallControl}. The study of approximation algorithms is the counterpart to bounding the prices of anarchy and stability. Rings have also been intensively studied in the distributed setting, e.g., among plenty of others, in the context of the fundamental leader election problem \cite{computingLeaderonrings}.

Network design games on rings has previously been studied by Fanelli et al.\,\cite{ring-design-journalversion}, which show a tight bound of 3/2 on the price of stability for the general setting.
In this paper we restrict ourselves to the \emph{multicast} version in which all players share the same target vertex $t=t_i$, $i=0,\ldots,n-1$ and answer the open question asked by Fanelli et al.\,\cite{ring-design-journalversion} about tight bounds of the price of stability for multicast game on a ring. 
We study various solutions concepts and analyze their quality compared to an optimum network (with respect to the social cost). In most cases, we are able to provide tight bounds. Furthermore, we also study the myopic sequential price of stability in general multicast network design games, and give a simpler proof of an upper bound of $4$ for this class of games compared to a more general proof in \cite{graphical-cost-sharing} (cf. this with the upper bound of $\log^2 n$ on the myopic sequential price of anarchy for multicast games).

\section{Preliminaries}

\emph{Network design game} is a strategic game of $n$ players played on
an edge-weighted graph $G=(V,E)$ with non-negative edge costs $c_e$, $e\in E$.
Each player $i$, $i=0,\ldots,n-1$, has a dedicated \emph{source} node $s_i$ and a \emph{target} node $t_i$. In the multicast game all $t_i$'s are the same and we denote it by $t$ throughout a paper. 
All $s_i$-$t_i$ paths form the set $\mathcal{P}_i$ of the \emph{strategies} of player $i$. Each player $i$ chooses one path $p_i\in \mathcal{P}_i$, the union of which creates a network in which all $s_i$-$t_i$ pairs are connected.
The cost for player $i$ is $\displaystyle \sum_{e\in p_i}\frac{c(e)}{n(e)}$ where $n(e)$ is the number of players that use the edge $e$ in their chosen paths.
A strategy profile $p$ is a vector of strategies for all players, $p=(p_0,\ldots, p_{n-1})$. A strategy profile is Nash equilibrium if no player can unilaterally change her strategy $p_i$ to $p_i'$ and improve her cost. 
The (social) cost of a strategy profile $p$ is the cost of the created network, i.e., the sum of the costs of all edges in the created network, which is, in turn, the sum of all players' costs.
An optimum network is a network of minimum social cost in which all $s_i$-$t_i$ pairs are connected. An optimum network can be equivalently described by a strategy profile $p^*$, and then we refer to $p^*$ as an optimum strategy profile.
Note that an optimum strategy profile is not, in general, a Nash equilibrium. 
Observe also that in a multicast game an optimum network forms a Steiner tree on the terminals $s_i$ and $t$ for $i=0,\ldots,n-1$.
If an underlying graph $G$ is a ring, then there are only $2$ possible strategies for each player. 

In this paper, we focus on the multicast game on rings. We can assume, without loss of generality, that every node but the target $t$ is a source of exactly one player.
Otherwise, we can modify the topology by the following two operations.
If there are $l>1$ players sharing the same node $x$ of the ring as a source vertex, we make $l$ copies of this vertex, add $l-1$ consecutive edges of cost $0$ between them to make a path of length $l-1$, replace $x$ in the ring with this path in a natural way, and associate each vertex with a unique source (copy of $x$).
If there is a node $x$ in the ring which is not a target nor a source of any player, we delete $x$ from the ring, and connect its two neighbors by an edge of cost $c_e+c_{e'}$, where $e,e'$ are the two adjacent edges of $x$.
A repetitive application of these two operations preserve the cost of optimum and Nash equilibrium strategy profiles, and also preserves the equilibrium properties of strategy profiles (if the strategies are expressed in the form ``go clockwise/counterclockwise to $s_i$").
\begin{figure}[t]
  \centering
  \includegraphics[scale=1]{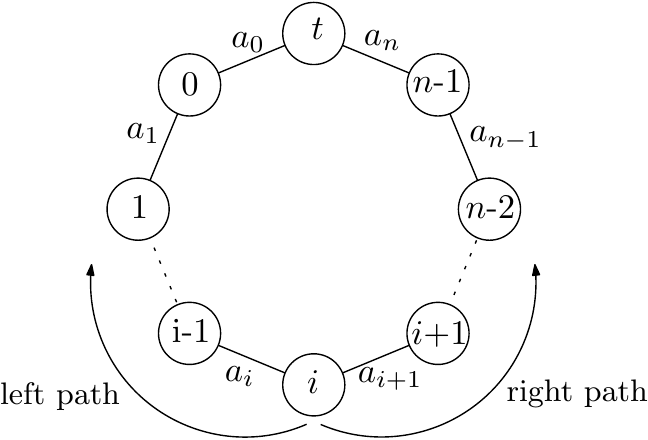}
  \caption{Multicast game on rings.}
  \label{Example of a game}
\end{figure}

We label the sources (players) and the edges connecting them in a counter-clockwise order as in Fig. \ref{Example of a game}, where $a_i$ denotes the cost of the $i$-th edge. Player $i$ has exactly $2$ strategies, one is to go \emph{left}, i.e., \emph{clockwise}, taking edges $i,i-1,\ldots 0$, or to go \emph{right}, i.e., \emph{counterclockwise}, taking edges $i+1,\ldots n$. 
Observe that the optimum strategy profile is the one which uses all edges except the most expensive edge. Let $o$ denote the most expensive edge. Then the (social) cost of an optimum network is $\sum_{i\neq o}a_i$. 

\emph{Price of anarchy} of a network design game is the ratio of the costs of a worst Nash equilibrium and of an optimum strategy profile.
\emph{Price of stability} is the ratio of the costs of a best Nash equilibrium and of an optimum strategy profile.  
\emph{Potential optimum} is a strategy profile $p$ that minimizes the potential function $\Phi=\sum_e \sum_{i=1}^{n_e} \frac{c_e}{i}$.
\emph{Potential-optimum price of anarchy/stability} is the ratio of the costs of the worst/best potential-optimum profile and of an optimum strategy profile.
\emph{Myopic sequential price of anarchy/stability} is the worst-case/best-case ratio of the costs of a strategy profile that can be obtained by ordering the players as in a permutation $\pi$ and letting player $\pi(i)$ choose the best-response $p_{\pi(i)}$ in the game induced by the first $i$ players $\pi(1),\pi(2),\ldots,\pi(i)$ and of an optimum profile.

\paragraph{Note on related concepts.} The term \emph{sequential price of anarchy} has been used \cite{sequentialPoA,sequentialIsolation} to express a different, yet still closely related, concept compared to the notion of the myopic sequential price of anarchy/stability. 
In the sequential price of anarchy, players also come one by one, and decide their strategy upon arrival, but the stability of the outcome is measured in terms of Nash equilibria again. In some sense, the game resembles extensive games. Observe that profiles $p$ that get compared to optima in the myopic sequential price of anarchy/stability are in general no Nash equilibria.

%%%%%%%%%%%%%%%%%%%%%%%%%%%%%%%%%%%%%%%%%%%%%%%%%%%%%%%%%%%%%%%%%%%%%%%%%%%%%%%%
%%%   Original Price of stability 
%%%%%%%%%%%%%%%%%%%%%%%%%%%%%%%%%%%%%%%%%%%%%%%%%%%%%%%%%%%%%%%%%%%%%%%%%%%%%%%%

\section{Price of Anarchy/Stability for Multicast on Rings}

It is known that the price of anarchy on general graphs is at most $n$, and that this bound is tight. The tight example actually is a multicast game on a ring, and the general analysis of the price of anarchy thus carries over to our multicast game on rings. For completeness, we show the example in Fig.~\ref{fig:price_of_anarchy}.

\begin{figure}[t]
\begin{minipage}[t]{0.5\textwidth}
%\begin{figure}
	\centering
	\includegraphics[scale=1]{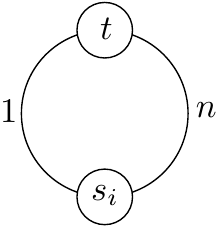}
	\caption{In the worst equilibrium, all players use the edge of cost $n$.}
	\label{fig:price_of_anarchy} 
%\end{figure}
\end{minipage}
\begin{minipage}[t]{0.5\textwidth}
%\begin{figure}[t]
  \begin{center}
  \includegraphics[scale=1]{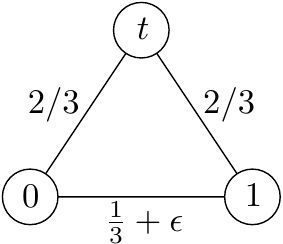}
  \caption{Example of a lower bound 4/3.}
  \label{4/3 lower bound}
  \end{center}
%\end{figure}
\end{minipage}
\end{figure}

\begin{theorem}[\cite{original}]	
	Price of anarchy for mutlicast games on rings is at most $n$. This is tight.
\end{theorem}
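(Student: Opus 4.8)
The plan has two halves: a general upper bound that holds for every network design game (and hence for the ring), and a matching ring instance witnessing tightness.

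For the upper bound I would use the standard Nash-inequality argument for fair cost sharing. Fix a Nash equilibrium $p$ and an optimum profile $p^*$. For each player $i$ the equilibrium condition gives $c_i(p) \le c_i(p^*_i, p_{-i})$, i.e.\ her cost is at most what she would pay by unilaterally switching to her optimum path while everyone else stays put. On that deviation each edge of $p^*_i$ is used by at least player $i$ herself, so $c_i(p^*_i, p_{-i}) = \sum_{e \in p^*_i} c_e/n_e \le \sum_{e \in p^*_i} c_e$. Since $p^*_i$ is a subpath of the optimum network, $\sum_{e \in p^*_i} c_e \le \mathrm{OPT}$, the optimum social cost. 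Summing over the $n$ players yields $\mathrm{cost}(p) = \sum_i c_i(p) \le n\cdot \mathrm{OPT}$, which is exactly the claimed bound.

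For tightness I would exhibit a ring with $n$ players whose worst equilibrium costs $n$ times the optimum. Label the ring $t, s_0, \ldots, s_{n-1}$ with edges $0, \ldots, n$ as in the preliminaries, set the cost of edge $0$ (incident to $t$) to $n$, the cost of edge $n$ (the other edge incident to $t$) to $1$, and the cost of every intermediate edge to $0$. Consider the profile in which all players go clockwise: every player then uses edge $0$ and pays a $1/n$ share of it, the intermediate edges are free, and edge $n$ is unused, so the social cost is $n$. The optimum omits the single most expensive edge, namely edge $0$, and uses only the remaining edges, of total cost $1$; the ratio is therefore $n/1 = n$.

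It remains to check that the all-clockwise profile is a Nash equilibrium. If player $i$ switches to going counterclockwise she uses edges $i+1, \ldots, n$; the intermediate edges cost nothing and edge $n$ is then used by her alone, so her deviation cost is $1$, exactly equal to what she already pays. Hence no player strictly gains by deviating and the profile is an equilibrium. The only subtle point, and the main thing to get right, is arranging the edge costs so that this equilibrium inequality holds while the ratio is pinned exactly at $n$: the equality (tie) in the deviation inequality is what makes the bound attained rather than merely approached, and one should note that the two-strategy structure of the ring leaves no further deviation to examine. Since the statement is attributed to \cite{original}, I would present the upper bound as a restatement of the known general argument and the ring above as the explicit tight instance.
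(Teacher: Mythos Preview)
Your proposal is correct and matches the paper's treatment: the paper does not give its own proof but cites the general $n$-upper-bound from \cite{original} and exhibits the tight ring instance (Fig.~\ref{fig:price_of_anarchy}) in which all players share an edge of cost $n$, which is exactly the instance you construct. Your explicit verification of the Nash condition and of the upper bound via the standard deviation-to-optimum argument fills in precisely the details the paper leaves implicit.
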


We now turn our attention to the price of stability. The example from Fig.~\ref{4/3 lower bound}, due to Anshelevich et al.\,\cite{original}, shows that the price of stability can be as high as 4/3 (observe that the game possesses a unique Nash equilibrium where both players use the direct edge to get connected to $t$).
We now show that the price of stability cannot get larger than that for multicast games on rings, and therefore answer the open question asked by Fanelli et al. \,\cite{ring-design-journalversion}.

%In this section we show that price of stability in the worst case is $\frac{4}{3}$. The multicast game depicted in the Fig. \ref{4/3 lower bound} has only one Nash equilibrium, a strategy profile that connects both players directly to the target vertex, whilst 
%optimum strategy has a cost $1+\epsilon$, meaning that the ratio is tending to $\frac{4}{3}$ when $\epsilon$ tends to $0$.
%
%Next we prove an upper bound of $\frac{4}{3}$. 

\begin{theorem}
\label{thm:pos<=4/3}
Price of stability in the multicast game on rings is at most $\frac{4}{3}$. 
\end{theorem}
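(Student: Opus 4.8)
The plan is to exhibit, for an arbitrary instance, one Nash equilibrium whose cost is at most $\frac43$ times the optimum; since the price of stability concerns the \emph{best} equilibrium, a single good equilibrium suffices. I restrict attention to \emph{threshold} profiles $T_m$, in which players $0,\dots,m-1$ go left and players $m,\dots,n-1$ go right, so that exactly one edge, which I call edge $m$, is left unused and the resulting network is a tree (a path). Writing $S=\sum_{j=0}^{n}a_j$, every threshold profile has cost $S-a_m$, whereas the optimum is the threshold $T_o$ of cost $S-a_o$, where $o$ is the most expensive edge. The whole task therefore reduces to producing a threshold profile $T_m$ that is a Nash equilibrium and whose discarded edge satisfies $a_o-a_m\le\frac13\,(S-a_o)$, since this inequality is exactly equivalent to $\frac{S-a_m}{S-a_o}\le\frac43$.

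First I would reduce the equilibrium test for $T_m$ to two boundary inequalities. Let $L_m=\sum_{j<m}\frac{a_j}{m-j}$ and $R_m=\sum_{j>m}\frac{a_j}{j-m}$ be the costs paid by the two players adjacent to the gap, and let $L_m',R_m'$ denote the same sums with every denominator increased by one (the shares a crossing player would pay on the opposite arc). The key observation is a monotonicity property: among the left-going players the deepest one, player $m-1$, has the greatest incentive to switch sides, since her equilibrium path \emph{contains} that of every shallower left player (so her share is largest) while her deviating path is \emph{contained} in that of every shallower left player (so her deviation is cheapest); hence if player $m-1$ does not deviate, no shallower left player does either, and symmetrically on the right. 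Consequently $T_m$ is a Nash equilibrium if and only if $L_m\le a_m+R_m'$ and $R_m\le a_m+L_m'$, so only two conditions must be checked.

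Second, I would show that such an equilibrium exists and locate it relative to $o$. As $m$ increases the left arc grows and the right arc shrinks, so the left deviation pressure $L_m-R_m'$ increases while the right deviation pressure $R_m-L_m'$ decreases; hence for large $m$ the deepest left player flees right and for small $m$ the leftmost right player flees left, and these opposing pressures leave a non-empty stable basin of thresholds satisfying both boundary inequalities. Starting best-response shifts from $m=o$ and moving toward this basin changes the gap one edge at a time, so the cost changes telescopically and the net change from $o$ to the stopping index $m$ equals exactly $a_o-a_m$.

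The crux is to bound $a_o-a_m$ by $\frac13(S-a_o)$, and the lever is the maximality of $a_o$. Each shift away from $o$ was forced by a violated boundary inequality, and each violation lower-bounds the relevant arc sum; at the same time, as soon as the shifting moves the expensive edge $o$ onto the \emph{growing} arc, any further crossing player is forced to share edge $o$, which makes deviating costly and halts the cascade. Quantifying this trade-off should yield a telescoping version of the exact two-player computation, where the tight boundary inequality combined with the maximality of $a_o$ already produces the extremal ratio $\frac43$ (attained by $a_0=a_2=2,\ a_1=1$, in which both players use their direct edges). I expect the main difficulty to be exactly this quantitative control of the harmonic-weighted arc sums along the cascade: showing that the accumulated increase $a_o-a_m$ never exceeds a third of the optimum, uniformly in the arbitrary position of the maximum edge, and dealing with the symmetric case and with ties among the edge costs.
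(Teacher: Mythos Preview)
Your overall strategy coincides with the paper's: start from the optimum threshold $T_o$, run boundary best responses to reach a Nash threshold $T_m$, and bound the cost increase $a_o-a_m$ against the optimum. Your reduction of the equilibrium test to the two boundary players is exactly the paper's Lemma~\ref{lem:endpoints_of_optimum_edge_improving}. So the architecture is right.

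The gap is that you do not actually prove the inequality $a_o-a_m\le\frac13(S-a_o)$; you only announce that ``quantifying this trade-off should yield'' it and flag it as the main difficulty. The heuristic you offer in its place is misleading: the claim that once $o$ lies on the growing arc ``deviating [becomes] costly and halts the cascade'' is not what happens. After the very first switch, edge $o$ already lies on the arc every subsequent deviator would use, yet the cascade need not stop---its length $k$ can be arbitrary, and the paper treats every $k$. Likewise, the asserted monotonicity of $L_m-R_m'$ in $m$ is not true in general (increasing $m$ shrinks each old term of $L_m$ while adding $a_m$), so your ``stable basin'' argument for existence needs the potential function rather than monotonicity; this is harmless for existence, but it means no clean telescoping is waiting to be harvested from those quantities.

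What the paper actually does for the crux is \emph{not} a structural telescoping argument. For each cascade length $k$ it writes the $k$ ``wanted to switch'' inequalities for players $o-1,\dots,o-k$ and the one ``did not switch'' inequality for player $o-k-1$, adds the normalization $\sum_{i\neq o}a_i=1$, and finds nonnegative multipliers (via LP duality) that combine these into $\sum_{i\neq m}a_i\le\frac43$. They list the multipliers explicitly for $k=1,\dots,7$; for $k\ge 8$ the first seven ``switch'' inequalities alone already force the bound (in fact $\le 1.331$). So the proof is computer-assisted and case-based rather than the uniform harmonic estimate you are hoping for; if you want a self-contained argument you will need either to discover such an estimate or to reproduce the LP certificates.
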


In the proof of the theorem we will use the following lemma.

\begin{lemma}
	\label{lem:endpoints_of_optimum_edge_improving}
If a strategy profile $p$ in which an edge $i$ is not used is not Nash equilibrium, then either player $i$ or player $i-1$ can improve her cost by changing her strategy. 
\end{lemma}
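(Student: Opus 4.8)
The plan is to exploit the fact that an unused edge $i$ forces a clean split of the ring into two arcs, and then to show that among all players who might want to deviate, the two sitting next to the gap have the strongest incentive.

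First I would characterize the profiles in which edge $i$ is unused. Recall that edge $i$ (the edge between $s_{i-1}$ and $s_i$) is traversed exactly by the players $j<i$ who go right and the players $j\ge i$ who go left. Hence edge $i$ is unused if and only if every player $j<i$ goes left and every player $j\ge i$ goes right. This partitions the players into a left arc $\{0,\dots,i-1\}$ and a right arc $\{i,\dots,n-1\}$, with player $i-1$ the right end of the left arc and player $i$ the left end of the right arc. On the left arc, edge $m$ (for $0\le m\le i-1$) carries $i-m$ players, and on the right arc edge $m$ (for $i+1\le m\le n$) carries $m-i$ players; these counts let me write down every player's cost and every post-deviation cost explicitly in terms of the edge costs $a_m$.

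Because each player has only two strategies, the only available move for a left-arc player is to switch to the right, and for a right-arc player to switch to the left. The heart of the argument is a monotonicity claim: among the left-arc players, player $i-1$ has the largest deviation gain, and symmetrically player $i$ has the largest gain among right-arc players. For a player $j<i-1$ I would write her current cost $C_j$ and her cost $D_j$ after switching right, do the same for player $i-1$, and compare the two gains $C_{i-1}-D_{i-1}$ and $C_j-D_j$. The common right-arc terms cancel, and the difference of gains simplifies to $\sum_{m=j+1}^{i-1} a_m\bigl(\tfrac{1}{i-m}+\tfrac{1}{i-m+1}\bigr)\ge 0$; the intuition is that a player deeper in the arc must additionally pay a share of the arc-edges $j+1,\dots,i-1$ that she would newly join, so crossing the gap is never more attractive for her than for the boundary player $i-1$. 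The symmetric computation on the right arc yields the analogous conclusion for player $i$.

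Finally I would assemble these pieces. If $p$ is not a Nash equilibrium, then some player has a profitable deviation; that player lies on the left arc or on the right arc (or is already $i-1$ or $i$). By the monotonicity claim, a profitable left-arc deviation implies one for player $i-1$, and a profitable right-arc deviation implies one for player $i$, which is exactly the statement. I expect the only real work to be the gain comparison of the previous paragraph; everything else is bookkeeping. The one point to check with care is that the edge-usage counts entering $D_j$ correctly reflect player $j$ newly sharing the edges of her own arc once she crosses the gap, since it is precisely these terms that survive the cancellation and force the inequality.
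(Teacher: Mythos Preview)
Your proof is correct and follows essentially the same approach as the paper: both arguments observe that with edge $i$ unused the profile is forced (everyone to the left of the gap goes left, everyone to the right goes right), and then show monotonicity of the deviation gain toward the boundary player. The paper phrases this monotonicity qualitatively (noting $C_{i-1}\ge C_j$ and $D_{i-1}\le D_j$), whereas you compute the gain difference $\sum_{m=j+1}^{i-1} a_m\bigl(\tfrac{1}{i-m}+\tfrac{1}{i-m+1}\bigr)\ge 0$ explicitly, but this is the same idea carried out in slightly more detail.
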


\begin{proof}
\label {firstLemma}
Since the strategy profile $p$ is not a Nash equilibrium, there exists a player $k$ that can change her strategy and improve the cost. 
Assume, without loss of generality, that $k<i-1$. 
Since edge $i$ is not used in $p$, it follows that player $k$ uses the left path to get to $t$.
The cost of $k$ in $p$ is thus $\displaystyle \sum_{l=0}^{k}\frac{a_l}{i-l}$, which is, by our assumption, bigger than the cost of $k$ if she switches to the right path, i.e., bigger than $\displaystyle \sum_{l=k+1}^{i-1} \frac{a_l}{i-l+1} + \displaystyle \sum_{l=i}^{n}\frac{a_l}{l-i+1}$. It follows that player $i-1$ also uses the left path in $p$, and thus her cost is at least the cost of player $k$, whereas the alternative cost of $i-1$ if she switches to the right path is at most the alternative cost of player $k$. Hence, the alternative cost of player $i-1$ is smaller than her cost in $p$, and player $i-1$ thus improves her cost as well. 
%
%Assume that the claim is not true and without loss of generality a player $k<i-1$ can improve her strategy by changing. Note that player $k$'s current cost is $\displaystyle \sum_{l=0}^{l=k}\frac{a_l}{i-l+1}$ and by the assumption it is bigger than the alternative cost, which is equal to $\displaystyle \sum_{l=k+1}^{l=i-1}\frac{a_l}{i-l+1}+\displaystyle \sum_{l=i}^{l=n}\frac{a_l}{l-i+1}$, whilst cost of player $i-1$ in the current strategy is at least the cost of player $k$ her cost in the alternative strategy is at most the cost of player $k$ in the alternative strategy. This is a contradiction. 
\qed
\end{proof}

\begin{proof}[of Theorem~\ref{thm:pos<=4/3}]
Consider an optimum strategy profile and let $o$ be the edge that is not used in it. 
If optimum is also Nash equilibrium, then price of stability is 1 and the claim follows. 
Otherwise, optimum is not a Nash equilibrium and, by Lemma~\ref{lem:endpoints_of_optimum_edge_improving}, one of the endpoints of the edge $o$ can improve its cost. Assume, without loss of generality, that player $o-1$ can improve. We now consider the following best-response dynamics: let $o-1$ improve; then, edge $o-1$ is not used, and in case we have not reached Nash equilibrium, let player $o-2$ improve (the player $o-2$ must be able to improve by Lemma~\ref{lem:endpoints_of_optimum_edge_improving}), and so on, until some player $o-k$ cannot improve anymore (this happens at the latest for player 0), and we reach a Nash equilibrium. 

We will show that the social cost of a Nash equilibrium that is reached by this best response dynamics is maximized for $k=1$, i.e., for the strategy profile reached after one step of the dynamics. We then show that the cost of such a profile is at most $4/3$ times the cost of the optimum, which proves the theorem.

Let us first show the second part. Assume therefore that player $o-1$ switches to improve her cost, and the resulting profile is an equilibrium. In particular, we have that player $o-2$ does not want to switch.
This can be expressed by the following two inequalities: $\displaystyle \sum_{l=o}^{l=n}\frac{a_l}{l-o+1}\leq \displaystyle \sum_{l=0}^{l=o-1}\frac{a_l}{o-l}$, and $\displaystyle \sum_{l=0}^{l=o-2}\frac{a_l}{o-1-l}\leq \displaystyle \sum_{l=o-1}^{l=n}\frac{a_l}{l-o+2}$.
We further introduce a normalization of the edge costs so that the edges in the optimum sum up to 1. Thus, we obtain the \emph{normalization} equation $\displaystyle \sum_{i=0,i\neq o}^{n} a_i =1$. 
Now, taking the first inequality with weight 5, the second with weight 1, and the normalization equality with weight 6, we obtain that the cost of the Nash equilibrium where edge $o-1$ is not used has cost 
$\displaystyle \sum_{i=0, i\neq o-1}^{i=n}a_i$ at most $\frac{4}{3}$.

We can proceed in the same way for every other value of $k=2,3,\ldots$ for which the reached Nash equilibrium does not use edge $o-k$. For every $k$, we get for each of the players $o-k-1,o-k,\ldots,o-1$ an inequality stating that the player did not want, respectively wanted to swap her strategy. 
For all values of $k=1,2,3,4,5,6,7$, we provide in the appendix the coefficients with which we need to take the inequalities and to obtain the upper bound of at most 4/3 on the cost of the Nash equilibrium. 

If the length of the best-response dynamics is 8 or more, it follows that we do not need to add further inequalities, and the 7 inequalities obtained for the first 7 deviating players are enough to show the upper bound of 4/3 on the cost of the reached Nash equilibrium.
\qed 
\end{proof}

%From now on, $G$ is an arbitrary Shapley network design game with $n$ players,
%$O=(O_1,\ldots,O_n)$ is an arbitrary social optimum, and $N=(N_1,\ldots,N_n)$ is
%a Nash equilibrium minimizing the potential function.
%
%Since $N$ has minimal potential, we have $\Phi(N)\leq \Phi(S)$ for any strategy
%profile $S$, a fact that we will use in the following sections to prove Theorem
%\ref{mainthm}.

\section{Potential-Optimum Price of Anarchy for Multicast on Rings}

The potential-optimum price of anarchy/stability has been first studied, in the context of the network design games, by Kawase and Makino~\cite{popos}. Besides other results, they proved that for multicast network design games, the two values collide. Therefore, in the following, we only study the potential-optimum price of anarchy (POPaA for short), and we show that it is at most two for rings, and provide an infinite family of examples with increasing POPoA, which we conjecture converges to two, but leave the formal analysis as an open problem. We have analyzed one such game from the family which shows that POPoA can be as large as 1.99992. 
%

%Network design game is an exact potential game, and the general upper bound for the price of stability is obtained via potential functions. In \cite{popos} authors initiated a study of potential optimum price of anarchy and potential optimum price of stability. Namely, $POPoA = \frac{max_{P\in argmin\Phi}cost(P)}{min_{P\in\mathcal{P}}cost(P)}$ and $POPoS = \frac{min_{P\in argmin\Phi}cost(P)}{min_{P\in\mathcal{P}}cost(P)}$. They also proved that in case of multicast game they are equal. 
%We will prove that $POPoA$ for the multicast game on a ring is upper bounded by $2$ and construct a lower bound which is approaching $2$ by any precision that we tried, though we do not have a proof that the process is actually approaching $2$. 

\begin{theorem}
\label{potentialTheorem}
POPoA is at most $2$ in the multicast game on rings. 
\end{theorem}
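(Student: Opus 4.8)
The plan is to collapse the game to a one-parameter family of profiles and then extract a single clean inequality from global potential-minimality. First I would record the structural fact that every strategy profile on the ring is a \emph{clean split}: by the normalization that every non-target node is a source, the unused edges of any profile form one contiguous arc containing no interior source, so this arc is either empty or a single edge. Ruling out the ``all edges used'' case for a potential minimizer (an exchange/uncrossing argument, in the spirit of Lemma~\ref{lem:endpoints_of_optimum_edge_improving}) leaves exactly the profiles indexed by the unique unused edge $k\in\{0,\dots,n\}$, in which player $i$ goes left iff $i<k$. For such a profile the congestion on edge $j$ is exactly $|j-k|$, so the potential becomes
\[ \Phi(k)=\sum_{j=0}^{n} a_j\,H_{|j-k|}, \]
its social cost is $\sum_{j\neq k}a_j=\sum_j a_j-a_k$, and the optimum is the profile $k=o$ with cost $\sum_{j\neq o}a_j$.

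Next I would dispose of the trivial case: if the potential optimum, with unused edge $k^{*}$, does \emph{not} use the maximum edge $o$, then its social cost is at most $\sum_{j\neq o}a_j$, which forces equality with the optimum and ratio $1$. So assume $k^{*}\neq o$, i.e.\ the potential optimum uses the most expensive edge. The whole theorem then reduces to proving the single inequality $a_o\le\sum_{j\neq o}a_j$, since this yields $\sum_j a_j\le 2\sum_{j\neq o}a_j$ and hence a social cost of at most $\sum_j a_j\le 2\sum_{j\neq o}a_j$, exactly twice the optimum.

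To obtain that inequality I would use only the global comparison $\Phi(k^{*})\le\Phi(o)$. Rewriting it as $0\le\sum_j a_j\big(H_{|j-o|}-H_{|j-k^{*}|}\big)$ and isolating the two indices $j=o$ and $j=k^{*}$ (each of which has harmonic weight $H_0=0$ in one of the two profiles) gives
\[ (a_o-a_{k^{*}})\,H_{|o-k^{*}|}\ \le\ \sum_{j\neq o,k^{*}} a_j\big(H_{|j-o|}-H_{|j-k^{*}|}\big). \]
The linchpin — and the step I expect to be the main obstacle to discover — is the harmonic estimate $H_{|j-o|}-H_{|j-k^{*}|}\le H_{|o-k^{*}|}$, valid for every $j$. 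I would prove it from the term-by-term bound $H_b-H_a\le H_{b-a}$ (because $\tfrac{1}{a+i}\le\tfrac1i$) together with the reverse triangle inequality $\big|\,|j-o|-|j-k^{*}|\,\big|\le|o-k^{*}|$ and monotonicity of $H$ (the case $|j-o|<|j-k^{*}|$ being trivial as the left side is then negative). Substituting this bound, the common factor $H_{|o-k^{*}|}>0$ (here $|o-k^{*}|\ge 1$) cancels, leaving $a_o-a_{k^{*}}\le\sum_{j\neq o,k^{*}}a_j$, that is precisely $a_o\le\sum_{j\neq o}a_j$, which finishes the argument.

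In summary, the heart of the proof is the clean identity $\Phi(k)=\sum_j a_j H_{|j-k|}$ and the cancellation engineered by $H_{|j-o|}-H_{|j-k^{*}|}\le H_{|o-k^{*}|}$; the only other technical point is the routine structural reduction to clean-split profiles (ruling out ``crossings'' in a potential minimizer). I also note that the bound is morally tight: the estimate is saturated as $a_{k^{*}}\to 0$ and $a_o\to\tfrac12\sum_j a_j$, which is consistent with the near-$2$ lower-bound family announced before the theorem.
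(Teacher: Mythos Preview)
Your proof is correct and follows the same overall strategy as the paper: reduce to showing $a_o \le \sum_{j\neq o} a_j$ from the single inequality $\Phi(k^{*})\le \Phi(o)$. The execution of that step, however, is genuinely different. The paper assumes w.l.o.g.\ $p<o$, moves the ``right-of-$o$'' terms across, obtains $a_o \le \big(\sum_{i<o} a_i H_{o-i} - \sum_{i<p} a_i H_{p-i}\big)/H_{o-p}$, then argues (without details) that this bound is worst at $p=1$ and finishes via $H_o-1\le H_{o-1}$. Your route replaces all of this with one uniform estimate, the harmonic ``triangle inequality'' $H_{|j-o|}-H_{|j-k^{*}|}\le H_{|o-k^{*}|}$, derived from $H_b-H_a\le H_{b-a}$ and $\big|\,|j-o|-|j-k^{*}|\,\big|\le |o-k^{*}|$; the common factor $H_{|o-k^{*}|}$ then cancels. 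This is cleaner, fully symmetric in the two sides of the ring, and avoids the unproved maximization-in-$p$ step. You are also more careful than the paper on the structural point: the paper simply asserts the potential optimum has a single unused edge, whereas you sketch the uncrossing argument (an adjacent pair $i\in R$, $i{+}1\in L$ can always be swapped, dropping the congestion on edge $i{+}1$ by two and leaving the rest unchanged), which indeed shows every potential minimizer can be taken to be a clean split.
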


\begin{proof}
Consider an optimal strategy profile $O$ and let $o$ be the edge that is not used in it. 
Consider a potential optimum strategy profile $P$ and let $p$ be the edge in it that is not used by any player. 
Assume, without loss of generality, that $p<o$.

By the definition of $P$, we have, for any strategy profile $Q$, $\Phi(P)\leq \Phi(Q)$, and in particular $\Phi(P)\leq \Phi(O)$, i.e., 

\begin{equation}
	\label{potentialInequality}
	\displaystyle \sum_{i=0}^{p-1} a_i\cdot H_{p-i}+\displaystyle \sum_{i=p+1}^{n}a_i\cdot H_{i-p} \leq  \displaystyle \sum_{i=0}^{o-1}a_i\cdot H_{o-i}+\displaystyle \sum_{i=o+1}^{n}a_i\cdot H_{i-o} \text{.}
\end{equation} 

We now concentrate on $a_o$ and show that $a_o$ is at most the cost of optimum, i.e., at most $\displaystyle \sum_{i\neq o}a_i$. This then shows that any strategy profile (and, in particular, $P$) has cost at most twice the cost of optimum.

Isolate in the second sum of the left hand side (LHS for short) of Equation~\eqref{potentialInequality} the term with $a_o$ and put the rest of the sum to the right hand side (RHS). This rest will dominate the second sum on the RHS, and by neglecting the resulting negative number, we get that
$ \displaystyle\sum_{i=0}^{p-1}a_i\cdot H_{p-i} + a_o\cdot H_{o-p} \leq \displaystyle\sum_{i=0}^{o-1} a_i\cdot H_{o-i} $, or, equivalently, that $a_o \leq \frac{\sum_{i=0}^{o-1} a_i\cdot H_{o-i} - \sum_{i=0}^{p-1}a_i\cdot H_{p-i}}{H_{o-p}}$.
Analyzing the influence of $p$ on the RHS, one can show that the RHS is maximized for $p=1$. Thus, we obtain that $a_o \leq \frac{\sum_{i=0}^{o-1} a_i\cdot H_{o-i}-a_0}{H_{o-1}}$. Then, since $H_o-1\leq H_{o-1}$ we get that $a_o \leq \sum_{i=0}^{o-1} a_i \leq \sum_{i\neq o} a_i$, which proves the claim and thus the theorem.
%
%
%We use this inequality together with the normalization equation $\displaystyle \sum_{i=0,  i\neq o}^{n}a_i = 1$ to maximize $\displaystyle \sum_{i=0 , i\neq p}^{n}a_i$.
%
%
%Since our goal is to upper bound $a_o-a_p$, then in the best case $o=n$, also $a_i=0$ for $i=p,\ldots, o-1$. If there are any non-zero cost edges $a_i$, with $i>o$, this would only increase a potential of a potential optimizer, and therefore give a less freedom to a variable $a_o$ to be larger. If there is any non-zero cost edges $a_i$ with $p<i<n$ then it would be better if the sum of these edge costs was added to an edge cost of $a_{p-1}$, and the edge costs become $0$. This way the potential of potential optimizer would decrease and the potential of an optimum would increase, again giving more freedom to variable $a_o$.
%
%Then the first inequality \ref{potentialInequality} has the form $\displaystyle \sum_{i=0}^{p-1}a_i\cdot H_{p-i}+a_o\cdot H_{n-p}\leq \displaystyle \sum_{i=0}^{p}a_i\cdot H_{n-i}$. In the best case $p=1$, $a_p = 0$, $a_0=1$ and then $a_o\leq \frac{H_n-1}{H_{n-1}}$, which is converging to $1$, as $n$ tends to infinity.  
%
%Certainly, the cost of $P$ is at most the cost of $O$ plus $a_O$ which is at most two (recall the normalization).
\qed
\end{proof}

We now provide a construction of a game which shows that POPoA is at least 1.99992. We conjecture that the construction can be used to prove an asymptotic lower bound of 2 on POPoA.

%Natural attempt to provide a tight example is to set $a_o:= \frac{\sum_{i=0}^{o-1} a_i\cdot H_{o-i}}{H_{o-1}}$.
%
%This, however, does not give a valid lower bound: the strategy profile where edge $0$ is not used has potential 
%
%
%Recall from the proof of Theorem \ref{potentialTheorem} that values $p=1$, $o=n$, $a_0=1$, $a_1=0$ and $a_o=\frac{H_n-1}{H_{n-1}}$ give the worst case bound for POPoA. Yet, this construction does not give a valid lower bound: a strategy profile in which edge $0$ is not used has potential $H_n\cdot \frac{H_n-1}{H_{n-1}}$, which is smaller than $H_n$, a potential of the potential optimizer.
%
%We can resolve this problem with the following approach.
%
Consider non-zero numbers $a_0,\ldots,a_{2\cdot l}$ that sum up to $1$, and where $l$ is constant, $o=n$, $p=l-1$ and where $a_n$ is equal to $\frac{H_n-a}{H_n}$, for some constant $a$. 
Compare the potentials of the strategy profiles which do not use edge $i$ for $i=0,\ldots, i=2\cdot l$ to the potential of $P$ (the strategy profile minimizing $\Phi$) that does not include the $p$-th edge. Note that after canceling the coefficients on both sides, the coefficient in front of $a_n$ is a sum of a constant number of terms converging to $0$ for $n$ tending to infinity, so these terms can be neglected. The potential of the strategy profiles which do not use edge $i$ for $\frac{n}{2}>i>2\cdot l$ is increasing when $i$ is increasing and decreasing towards $n$. 
We solved the resulting system of linear equations and obtained a lower bound for POPoA converging to $1.99992$ for $l=1000$ and $n$ tending to infinity. Thus, we have the following proposition. 

\begin{proposition}
There are games that have POPoA $1.99992$.   
\end{proposition}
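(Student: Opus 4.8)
The plan is to construct a single explicit game in the family sketched above whose potential optimum costs $1.99992$ times the social optimum; since for that instance the potential optimum is unique and POPoA equals the ratio of these two costs, this witnesses POPoA $=1.99992$ and establishes the proposition. Following the sketch I would fix $l=1000$, take the ring size $n$ large, declare the social optimum $O$ to omit the most expensive edge $o=n$, and aim for the potential optimum $P$ to omit edge $p=l-1$. The free parameters are the edge costs: I would make $a_n=(H_n-a)/H_n$ dominant, normalise $\sum_{i=0}^{2l}a_i=1$, and leave $a_0,\ldots,a_{2l}$ to be chosen so as to push the cost ratio as close to $2$ as possible.

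First I would write down the potential of each candidate profile. As in the proof of Theorem~\ref{potentialTheorem}, the potential optimum omits exactly one edge, so it suffices to consider the profiles $Q_j$ that leave edge $j$ unused, with $\Phi(Q_j)=\sum_{i=0}^{j-1}a_i H_{j-i}+\sum_{i=j+1}^{n}a_i H_{i-j}$, exactly the left-hand side of Equation~\eqref{potentialInequality}. The demand that $P=Q_p$ be the global potential minimiser is then the system of linear inequalities $\Phi(Q_p)\le\Phi(Q_j)$, one for each $j\neq p$, and the quantity to be maximised is the linear-fractional objective $\frac{\sum_{i\neq p}a_i}{\sum_{i\neq o}a_i}$. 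Together with the normalisation this is a linear program in the $a_i$.

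The next step is to make the program finite and $n$-independent. In each difference $\Phi(Q_j)-\Phi(Q_p)$ the coefficient of the large edge $a_n$ is $H_{n-j}-H_{n-p}$, a sum of a constant number of terms of order $1/n$; hence it tends to $0$ as $n\to\infty$ and may be discarded in the limit, which removes $a_n$ from every constraint and leaves only the finitely many variables $a_0,\ldots,a_{2l}$. I would then prove the monotonicity asserted just before the statement — that $\Phi(Q_i)$ increases in $i$ for $2l<i<n/2$ and decreases as $i\to n$ — so that the constraints indexed by the ``middle'' edges are implied by those at the two ends, and only the finitely many constraints coming from indices near $0$ and near $n$ can bind. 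Solving the resulting fixed linear program for $l=1000$ yields the value $1.99992$.

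The hard part will be making the two limiting approximations rigorous. Discarding the $a_n$-terms is a statement about an infinite family of shrinking coefficients, so one must argue uniformly that the finite limiting program's optimum is genuinely approached by honest games at finite $n$, rather than only in a formal limit. More delicate still is verifying that the cost vector returned by the solver really makes $Q_p$ the potential minimiser over \emph{all} profiles: the solver checks only finitely many inequalities, and it is precisely the monotonicity of $i\mapsto\Phi(Q_i)$ that certifies the unchecked middle constraints. A clean proof of that monotonicity is therefore the crux of the argument; once it and the uniformity of the $a_n$ estimate are in place, the claimed value $1.99992$ follows by reading off the optimum of the finite program.
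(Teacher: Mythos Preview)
Your proposal follows exactly the paper's approach: same choice $l=1000$, $o=n$, $p=l-1$, $a_n=(H_n-a)/H_n$, nonzero $a_0,\ldots,a_{2l}$ summing to $1$; same observation that the $a_n$-coefficients in the potential differences vanish as $n\to\infty$; same monotonicity assertion on $i\mapsto\Phi(Q_i)$ over the middle range to discard those constraints; and the same appeal to a numerical solver on the remaining finite linear system to read off $1.99992$. The rigor gaps you flag --- uniformity of the limit and the unproved monotonicity --- are left open in the paper as well (the proposition is presented as a computer-assisted claim), and one small point: the paper describes a \emph{sequence} of games whose POPoA converges to $1.99992$ rather than a single instance attaining it exactly, so your phrase ``a single explicit game'' slightly overstates what the construction delivers.
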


We leave it as an open problem to analyze the convergence of the POPoA of the above construction, and conjecture that it converges to two.
%
%While the constant obtained above is close enough to $2$ for all practical reasons, unfortunately we were not able to prove that the solution $a_p$ is converging to $0$ while $l$ tends to infinity, but we have a good reason to believe so. While minimizing $a_p$ in the linear program, it was smaller than any number we tried, though for smaller numbers bigger linear programs had to be solved. The only problem to obtain better results was a real time. Therefore we conjecture the following: 

\begin{conjecture}
There are games that have POPoA arbitrarily close to $2$.
\end{conjecture}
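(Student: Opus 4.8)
The plan is to show that the family of games just constructed --- expensive edge $a_n=(H_n-a)/H_n$, cheap edges $a_0,\dots,a_{2l}$ normalised by $\sum_{m=0}^{2l}a_m=1$, all remaining edges of cost $0$, optimum $O$ omitting $o=n$ and potential optimum $P$ omitting $p=l-1$ --- attains POPoA tending to $2$ when first $n\to\infty$ and then $l\to\infty$. The ratio itself is immediate: the optimum costs $\sum_{i\neq n}a_i=1$, while $P$ omits only edge $l-1$ and therefore \emph{includes} the expensive edge, so its cost is $a_n+1-a_{l-1}=2-a/H_n-a_{l-1}$. Hence it suffices to establish (i) that for these edge costs the potential is genuinely minimised by omitting edge $l-1$ for all large $n$, so that $P$ is as claimed; and (ii) that the cheap edges can be chosen so that $a_{l-1}\to 0$ as $l\to\infty$. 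Granting both, the ratio tends to $2-\lim_{l}a_{l-1}=2$, and since Theorem~\ref{potentialTheorem} caps POPoA at $2$, the conjecture follows.

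For step (i) I would work with the closed form $\Phi_j=\sum_{m<j}a_mH_{j-m}+\sum_{m>j}a_mH_{m-j}$ for the (unique, tree) profile omitting edge $j$, which is exactly the pair of sums appearing in~\eqref{potentialInequality}, together with its discrete derivative
\[
\Phi_j-\Phi_{j-1}=\sum_{m=0}^{j-1}\frac{a_m}{j-m}-\sum_{m=j}^{n}\frac{a_m}{m-j+1}.
\]
The expensive edge enters this difference only through the single term $-a_n/(n-j+1)$, which tends to $0$ as $n\to\infty$ for every fixed $j$; this is the cancellation the text already records. Consequently, in the limit $n\to\infty$ the location of the minimiser among the cheap candidates $j\in\{0,\dots,2l\}$ is governed solely by the reduced potential $\psi(j)=\sum_{m\neq j,\ m\le 2l}a_mH_{|m-j|}$, which differs from $\Phi_j$ by $a_nH_{n-j}=H_n-a+o(1)$, a quantity independent of $j$ up to $o(1)$. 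It then remains to (a) force $\psi$ to be minimised at $j=l-1$, i.e. $\psi(l-1)-\psi(l-2)\le 0\le\psi(l)-\psi(l-1)$ together with a convexity/monotonicity argument excluding any other cheap minimum; (b) dispatch the single remaining competitor $j=n$ through $\Phi_{l-1}\le\Phi_n$, which after $n\to\infty$ reduces to $a\ge\psi(l-1)$ and is met by taking the constant $a$ large enough; and (c) prove the ``increasing, then decreasing towards $n$'' behaviour asserted in the text for $2l<j<n$ directly from the sign of $\Phi_j-\Phi_{j-1}$ via $H_k=\ln k+\gamma+O(1/k)$, ruling out intermediate edges.

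The crux is step (ii): exhibiting cheap costs summing to $1$ for which $\psi$ is minimised at $l-1$ while $a_{l-1}$ can be driven to $0$. I would phrase this as the linear program the authors solved numerically --- minimise $a_{l-1}$ subject to $\sum_{m=0}^{2l}a_m=1$, $a_m\ge 0$, and the inequalities $\psi(l-1)\le\psi(j)$ for $j=0,\dots,2l$ --- and analyse its optimum asymptotically in $l$. Each constraint is linear in the $a_m$ with coefficients $H_{|m-j|}$, so the binding constraints at the optimum form a banded system centred at $l-1$; using $H_k=\ln k+\gamma+O(1/k)$ one would estimate the extremal profile and show that the optimal value $a_{l-1}^{\star}$ decays with $l$ (the datum $a_{l-1}^{\star}\approx 8\cdot 10^{-5}$ at $l=1000$ suggests a decay like a small power of $l$, or $1/\log l$).

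The main obstacle I expect is precisely this asymptotic control of the linear program. The constraint matrix is dense and its entries are slowly varying harmonic numbers, so there is no obvious closed form for the extremal profile; moreover a naive symmetric choice $a_m=a_{2l-m}$ places the minimiser at the midpoint $j=l$ rather than at $l-1$ and gives no reason for $a_{l-1}$ to vanish, so the symmetry must be broken in a controlled way. Turning the observed numerical decay of $a_{l-1}^{\star}$ into a proven bound is exactly the gap separating the proposition already established from the conjecture, and is why the authors leave it open.
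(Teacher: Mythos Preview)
The statement is a \emph{conjecture}; the paper does not prove it and explicitly leaves the analysis open. There is therefore no proof in the paper to compare your proposal against. What the paper offers is exactly the construction you describe, together with the numerical observation (for $l=1000$) that POPoA reaches $1.99992$, and nothing more.

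Your outline tracks the paper's construction faithfully and isolates the two tasks correctly: (i) verifying that omitting edge $l-1$ really is the potential optimum for large $n$, and (ii) showing that the cheap weights can be chosen with $a_{l-1}\to 0$ as $l\to\infty$. Your treatment of (i) --- reducing to the finite ``reduced potential'' $\psi$ on the cheap block by noting that the contribution of $a_n$ to $\Phi_j-\Phi_{j-1}$ vanishes as $n\to\infty$, and handling the far-away candidates via monotonicity of the discrete derivative --- is the right way to make rigorous the informal remarks in the paper about coefficients of $a_n$ becoming negligible and $\Phi_i$ being unimodal in $i$.

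The genuine gap is exactly where you place it: step (ii). You reformulate it as the linear program the authors solved numerically and propose to analyse its optimum asymptotically via $H_k=\ln k+\gamma+O(1/k)$, but you do not carry this out, and you yourself note that the constraint matrix is dense with slowly varying entries and that symmetric guesses fail. This is not a flaw in your write-up so much as an honest acknowledgement that the conjecture remains open; your proposal is a plan of attack, not a proof, and it does not go beyond what the paper already has.
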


\section{Myopic Sequential Prices of Anarchy/Stability}

In this section we study the myopic sequential price of anarchy and the myopic sequential price of stability.

%In \cite{sequentialOriginal} authors introduced sequential price of anarchy to measure the efficiency of a resulting network if players arrive one by one. Formally speaking, we have a permutation of players $\pi$. They come one by one and at each step each player chooses the least expensive path for connecting to a target vertex. The ratio of the resulting network cost in the worst case and the optimum network cost is sequential price of anarchy. We introduce a sequential price of stability for the multicast game, which is the ratio of the resulting network cost in the best case and the optimum network cost. 

\subsection{Sequential price of anarchy in multicast game on rings}

\begin{lemma}
The myopic sequential price of anarchy is at most $2$ in the multicast games on rings.
\end{lemma}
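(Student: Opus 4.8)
The plan is to show that the cost $C$ of the profile $p$ reached by any myopic best-response order is at most twice the optimum cost $\mathrm{OPT}=\sum_{i\neq o}a_i$, by distinguishing whether or not the most expensive edge $o$ ends up being used in $p$. The case where $o$ is \emph{not} used is immediate: then $p$ uses only edges from $\{0,\dots,n\}\setminus\{o\}$, so $C\leq\sum_{i\neq o}a_i=\mathrm{OPT}$ and the ratio is at most $1$. All the work is therefore in the case where edge $o$ is used; there I only have the crude bound $C\leq\sum_{i=0}^{n}a_i=\mathrm{OPT}+a_o$, and the lemma will follow once I prove the key inequality $a_o\leq\mathrm{OPT}$.

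To establish $a_o\leq\mathrm{OPT}$ when edge $o$ is used, the idea I would exploit is that whoever is the \emph{first} player to pay for edge $o$ pays its entire cost. Concretely, let $k$ be the first player, in the arrival order $\pi$, whose chosen path contains edge $o$; such a player exists because $o$ is used in the final profile and strategies are fixed upon arrival. At the moment $k$ arrives, no earlier player uses edge $o$, so $k$ is its unique user and her share of it is the full $a_o$. Hence the cost of her chosen path at that moment is at least $a_o$.

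Next I would combine the best-response property at $k$'s arrival with the two-strategy structure on the ring. Player $k$'s two paths are ``go left'', using edges $0,\dots,k$, and ``go right'', using edges $k+1,\dots,n$; these partition the edge set, so exactly one of them contains $o$, namely the one she chose, and her \emph{other} path avoids edge $o$ entirely. Since $k$ plays a best response among the players present, the cost of her chosen path is at most the cost of this other path, which in turn is at most its unshared cost $\sum_{e\in\text{other path}}a_e\leq\sum_{i\neq o}a_i=\mathrm{OPT}$ (every denominator $n(e)$ is at least $1$, and the other path uses only edges different from $o$). Chaining the inequalities gives $a_o\leq(\text{chosen-path cost})\leq(\text{other-path cost})\leq\mathrm{OPT}$, and therefore $C\leq\mathrm{OPT}+a_o\leq 2\,\mathrm{OPT}$.

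The step I expect to require the most care is pinning down the ``first user pays in full'' argument: one must check that a player's strategy is committed at arrival and never revised, so that the earliest player whose committed path contains $o$ is genuinely its sole user at that time (later players can only join an edge, never leave it). Everything else is bookkeeping with the ring's two-path structure and the normalization $\mathrm{OPT}=\sum_{i\neq o}a_i$ recalled in the preliminaries.
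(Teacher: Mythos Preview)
Your argument is correct and follows essentially the same route as the paper: both proofs isolate the first arriving player whose path contains the optimum edge $o$, observe that she pays $a_o$ in full, and use her best-response property to bound $a_o$ by the (unshared) cost of her other path, hence by $\mathrm{OPT}$. Your write-up is in fact more careful than the paper's very terse version---you make the case split on whether $o$ is used explicit and spell out why the ``first user pays in full'' step is sound in the myopic model.
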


\begin{proof}
Consider an optimal strategy profile and let $o$ be the edge that is not used.
Consider any permutation (order) $\pi$ of the players. If any player $\pi(i)$, $i<o$, decides to take a path containing edge $o$ for the first time then it means that $a_o\leq \displaystyle \sum_{l=0}^{l=i}a_l$ which is bounded by the cost of optimum.
Therefore, the whole cost of the ring is bounded by $2$ times the cost of optimum. 
\qed
\end{proof}

\begin{figure}[t]
  \centering
  \includegraphics[width=0.5\linewidth]{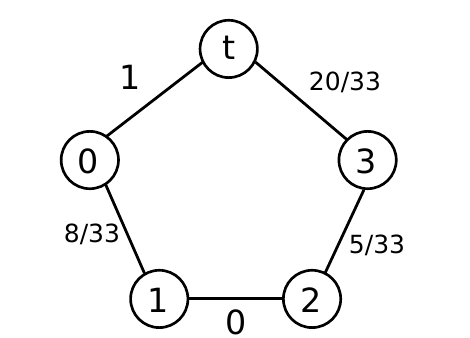}
  \caption{Lower bound example for the sequential price of anarchy}
  \label{sequentialPriceOfAnarchy}
\end{figure}
The presented upper bounds is tight, as shows the example in Fig. \ref{sequentialPriceOfAnarchy}, where $\pi=\{0,1,3,2\}$ results in myopic sequential price of anarchy equal to $2$.

\subsection{Myopic sequential price of stability in multicast game}

In the myopic sequential price of stability we consider the best permutation of players, with respect to the resulting network cost. In \cite{graphical-cost-sharing} authors prove that when the social knowledge network graph is directed acyclic then the price of anarchy is 
bounded by $4$ (Theorem 8). If we consider that in the social knowledge graph each incoming player knows all the previous players then the result can be directly translated into our setting, but we give a different (simpler than the proof of general result in \cite{graphical-cost-sharing}), proof for our setting: 

\begin{theorem}
The myopic sequential price of stability in multicast games on arbitrary graphs is at most $4$. 
\end{theorem}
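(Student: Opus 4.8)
The plan is to bound the social cost directly, as the cost of the \emph{union} of the players' arrival paths, rather than through the potential. Fix an optimum Steiner tree $T^{*}$, rooted at the common target $t$, and write $\mathrm{OPT}=\sum_{e\in T^{*}}c_e$. Since each player chooses her path once, upon arrival, and keeps it, the final network is exactly $\bigcup_p P_p$, where $P_p$ is the myopic best response of player $p$ to the network already present when she arrives. Hence the social cost equals $\sum_{e}c_e\,[\,e\in\bigcup_p P_p\,]$, and it suffices to exhibit one arrival order for which this quantity is at most $4\,\mathrm{OPT}$. I note at the outset that bounding the social cost by the final potential $\Phi=\sum_e c_e H_{n_e}$, equivalently by $\sum_p(\text{arrival payment of }p)$, is too weak here: heavily shared edges contribute a logarithmic factor to $\Phi$, so the argument must track the edges that are actually \emph{bought} (used for the first time), not the payments.

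For the order I would process the sources in the order in which a fixed depth-first traversal of $T^{*}$ started at $t$ first visits them, skipping the Steiner nodes that are not sources. The decisive structural property of this order is that when source $s_p$ is about to arrive, every ancestor of $s_p$ in $T^{*}$ that is itself a source has already arrived, and is therefore connected to $t$ in the current network; let $u_p$ denote the nearest such ancestor (or $t$ if none exists). Thus $s_p$ could always connect by buying the edges of the tree path from $s_p$ up to $u_p$ and then riding, at fair shares, the already-present part of the network from $u_p$ to $t$. The newly \emph{bought} edges on this reference route are precisely the not-yet-present edges of the $s_p$--$u_p$ pendant, and, crucially, summed over all players each edge of $T^{*}$ is bought in this way at most once, so the total first-use cost along these reference routes is at most $\mathrm{OPT}$.

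The heart of the proof, and the step I expect to be the main obstacle, is that a myopic player need not follow this reference route: she minimizes \emph{her own} fair-share cost and may introduce edges outside $T^{*}$. The task is therefore to show that the edges a best responder newly buys cannot cost much more than those of her reference pendant. I would argue by an exchange argument: any new edge $e$ that $s_p$ builds, paying its full cost $c_e$, must save her at least $c_e$ in shares elsewhere, so she only ever creates new edges in order to bypass \emph{lightly} shared (hence essentially peripheral, recently built) portions of the network; combined with the depth-first order this confines each player's fresh purchases to a short neighbourhood of her pendant. Charging each optimum edge for (i) the single first use along reference routes and (ii) the bounded deviations of best responders, and accounting for the up-and-down structure of the traversal by an Euler-tour doubling, yields two independent factors of two and hence the bound $4\,\mathrm{OPT}$. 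The delicate point to be made fully rigorous is exactly (ii): controlling the total cost of the off-tree edges introduced by self-interested players, which is where the present, shorter argument must do the work that the general directed-acyclic-graph analysis of \cite{graphical-cost-sharing} handles abstractly.
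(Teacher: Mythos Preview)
Your setup is the right one---the DFS arrival order on an optimal Steiner tree and the Euler-tour doubling are exactly what the paper uses---but step~(ii), the control of the off-tree edges a best responder may buy, is a genuine gap, and the exchange sketch you give does not close it. The point is that a myopic best responder minimises her \emph{arrival cost} $S_p$, not her \emph{first-use cost} $B_p$. From $S_p\le(\text{reference-route cost})$ you get no useful bound on $B_p$: if the shared portion of the reference route (from $u_p$ to $t$) is expensive per share, the player may rationally buy an entirely fresh path to $t$ whose $B_p$ far exceeds the cost of her pendant $s_p\!-\!u_p$. Your sentence ``any new edge $e$ that $s_p$ builds \ldots must save her at least $c_e$ in shares elsewhere'' is true only in the aggregate sense that $S_p$ is minimal; it does not confine the fresh purchases to ``a short neighbourhood of her pendant''. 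So summing the pendant costs to $\mathrm{OPT}$ and doubling does not yet control $\sum_p B_p$.

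The paper avoids this obstacle by a different comparison: instead of sending player $i$ to her nearest \emph{ancestor}, send her to the \emph{immediately preceding} player $i-1$ in the DFS order, and then let her copy $i-1$'s actual chosen path. On that alternative, every edge that $i-1$ used alone is now shared by two, so player $i$ pays at most $d_T(s_i,s_{i-1})+S_{i-1}-\tfrac{1}{2}B_{i-1}$; hence $S_i\le d_T(s_i,s_{i-1})+S_{i-1}-\tfrac{1}{2}B_{i-1}$. Summing telescopes the $S$'s and leaves $\tfrac{1}{2}\sum_{i\le n-2}B_i+S_{n-1}\le \sum_i d_T(s_i,s_{i-1})+d_T(s_0,t)\le 2\,\mathrm{OPT}$ by the DFS/Euler-tour bound; since $S_{n-1}\ge \tfrac{1}{2}B_{n-1}$, one gets $\sum_i B_i\le 4\,\mathrm{OPT}$. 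The missing idea in your plan is precisely this: compare to the \emph{previous arrival} rather than to an ancestor, so that the saving term $-\tfrac{1}{2}B_{i-1}$ appears and converts a bound on arrival costs into a bound on first-use costs via telescoping.
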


\begin{proof}
Since there is a common target vertex $t$, any optimum strategy profile forms a Steiner tree $T$ on terminals $s_i$, $i=0,\ldots, n-1$ and $t$. 
Consider a permutation of the vertices that corresponds to a depth-first search of the tree $T$, and make it the identity permutation $(0,1,\ldots n-1)$. 
Let the players enter the game in this order, and make the myopic best responses. 
Denote by $B_i$ the cost of the edges that player $i$ uses alone in her strategy at the moment she enters the game, and let $S_i$ be the overall cost of player $i$ when she enters. 
Then the cost of the resulting network is $\displaystyle \sum_{i=0}^{i=n-1}B_i$. 
Since every player optimizes her cost when she enters the game, we have the following chain of inequalities: $S_i\leq d_T(s_i,s_{i-1}) + S_{i-1}-\frac{1}{2}B_{i-1}$, for $i=1,\ldots, n-1$, where $d_T(u,v)$ is the distance between nodes $u$ and $v$ using only the edges of the tree $T$. 
Each player $i$ has the following alternative strategy: first travel to the source (vertex) $s_{i-1}$ using the edges of $T$, and then follow the strategy of player $i-1$. 
Note that in this alternative strategy, player $i$ saves at least half of the cost of the edges that player $i-1$ takes alone when she enters the game. For the first player, we have the following inequality $S_0\leq d_T(s_0,t)$, because when she enters the game, one of the possible strategies is to take a direct path from $s_0$ to $t$ using only the edges of $T$. 
By summing up all inequalities given above, we get that $\frac{1}{2}\displaystyle \sum_{i=0}^{i=n-2}B_i+S_{n-1}\leq 2\cdot cost(T)$. Note that $S_{n-1}\geq \frac{1}{2}B_{n-1}$, which results into the upper bound of 4. 
\qed
\end{proof}

This upper bound is tight, as the example (Theorem 5) from \cite{graphical-cost-sharing} shows, ratio in the lower bound example is arbitrarily close to $4$.

\begin{proposition}
There is a multicast game with the myopic sequential price of anarchy arbitrarily close to $4$.
\end{proposition}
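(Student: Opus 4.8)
The plan is to establish the proposition by exhibiting the lower-bound instance of Bil\'o et al.\,\cite{graphical-cost-sharing} (Theorem 5), specialised to the multicast setting in which the social-knowledge graph is complete, i.e.\ every arriving player sees all of her predecessors, so that the result translates verbatim into the model of this paper. Concretely, I would build a weighted graph with a single target $t$, a cheap Steiner tree $T$ that is intended to be the optimum, and a family of more expensive ``direct'' connections; a small parameter $\varepsilon\to 0$ is then used to calibrate the edge costs, and I would normalise so that $cost(T)=1$, so that ``cost arbitrarily close to $4$'' means total cost arbitrarily close to $4\cdot cost(T)$.

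First I would fix the arrival order to be the one witnessing the worst case, and then argue by induction on the index $i$ of the arriving player that her unique myopic best response is exactly the \emph{alternative} strategy used in the proof of the upper bound of $4$ proved above: travel along $T$ to the previous source $s_{i-1}$ and then copy the path chosen by player $i-1$. The key point is that, with the right calibration, this makes each inequality $S_i \le d_T(s_i,s_{i-1}) + S_{i-1} - \tfrac12 B_{i-1}$ and the base case $S_0 \le d_T(s_0,t)$ hold with equality, and that the sharing factor on the edges player $i-1$ bought alone is \emph{exactly} one half, because precisely two players end up on each such edge. Summing the now-tight relations exactly as in the upper-bound argument gives $\tfrac12\sum_{i=0}^{n-2} B_i + S_{n-1} = 2\cdot cost(T)$; letting $S_{n-1}\to 0$ (and hence $B_{n-1}\to 0$) as the instance grows then yields $\sum_i B_i \to 4\cdot cost(T)$, i.e.\ the social cost of the produced network tends to $4$.

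The main obstacle, and the step deserving the most care, is the calibration of the edge costs so that at every arrival the intended path is genuinely the \emph{cheapest} available option. The danger is illustrated by the naive ``one expensive edge per source'' construction: a myopic player will happily reroute through an already-built expensive edge so as to pay only a fair share of it, which collapses the ratio back to $1$. The instance must therefore be engineered so that such partial-sharing shortcuts are either unavailable or never strictly advantageous, while each player is still forced to leave behind an expensive privately-used segment that contributes to $B_i$. Verifying that \emph{no} cheaper deviation (direct, through-$T$, or partial sharing of a previously bought edge) exists at any step of the dynamics is the heart of the argument, and is where I would follow the construction of \cite{graphical-cost-sharing} most closely, checking the required cost inequalities step by step.

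Finally, I would observe that the construction is robust with respect to the arrival order in the sense needed: since it forces the ratio to approach $4$ along the worst-case order, the myopic sequential price of anarchy is at least $4-o(1)$, matching the qualitative behaviour promised by the statement; and if the instance is arranged to be symmetric in its players, the same tight inequalities are obtained for \emph{every} order, which additionally confirms that the upper bound of $4$ cannot be improved.
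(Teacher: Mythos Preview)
Your proposal is essentially the same approach as the paper's: the paper does not give its own argument at all but simply points to Theorem~5 of \cite{graphical-cost-sharing}, and you are proposing to reproduce exactly that construction, adding an informal explanation of why it makes the inequalities $S_i \le d_T(s_i,s_{i-1}) + S_{i-1} - \tfrac12 B_{i-1}$ from the preceding upper-bound proof tight. So at the level of ``which proof is this'', there is nothing to compare.

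One caution about the extra detail you supply. The claim that in the tight instance ``precisely two players end up on each such edge'' is not something you can read off from the upper-bound proof: if every player literally walked to $s_{i-1}$ and then copied $i-1$'s path, the edges bought by early players would accumulate many users, not two, and the saving term would be much larger than $\tfrac12 B_{i-1}$. The actual construction in \cite{graphical-cost-sharing} is engineered so that the relevant sharing factor is (close to) $\tfrac12$, but this comes from the specific gadget, not from the chaining heuristic you describe. Likewise the symmetry remark at the end (``the same tight inequalities are obtained for \emph{every} order'') is what one needs for the price-of-\emph{stability} reading of the statement, and it is a property of the particular instance, not an automatic consequence of the analysis; you should verify it from the cited construction rather than assert it. None of this affects the correctness of your plan---you are citing the right source for the right reason---but if you intend to spell the example out, these are the two places where the sketch would need tightening.
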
 

\subsection{Myopic sequential price of stability on rings}

In this section we consider the myopic sequential price of stability of the multicast games on rings. 
The example from Fig. \ref{4/3 lower bound} shows that it can be as high as $\frac{4}{3}$. We prove the following upper bound.

\begin{theorem}
\label{sequentialPoS}
The myopic sequential price of stability in the multicast games on rings is at most $\frac{26}{19}$.
\end{theorem}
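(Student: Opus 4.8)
The goal is to bound the myopic sequential price of stability on a ring by $26/19 \approx 1.368$, by exhibiting *some* arrival order of the players whose resulting myopic best-response profile has cost at most $\frac{26}{19}$ times the optimum. Since we are free to choose the permutation, the plan is to pick a clever order and analyze the structure of the profile it produces.

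Here is the proof plan I would follow.

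\medskip

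\emph{Setup and choice of permutation.} As in the proofs of Theorem~\ref{thm:pos<=4/3} and Theorem~\ref{potentialTheorem}, let $O$ be an optimum profile and let $o$ be the unique edge it does not use; normalize so that $\sum_{i\neq o} a_i = 1$, i.e.\ the optimum has cost $1$. The cost of any myopic profile is $1 + x$, where $x$ is the ``extra'' paid for (partial) use of edge $o$; equivalently $x$ accounts for the edge $o$ being bought at all. So the target bound $\tfrac{26}{19}$ is equivalent to showing $x \le \tfrac{7}{19}$, and the whole game is to keep players from ever committing heavily to the forbidden edge $o$. The natural permutation to try is to let the players arrive \emph{in order of increasing distance from the target along the optimum tree} — that is, bring in first the players whose shortest path to $t$ (using optimum edges) is cheapest, and in particular arrange the two endpoints of $o$, players $o$ and $o-1$, to arrive last. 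The intuition, mirroring the price-of-stability argument, is that if early arrivers have already bought the cheap optimum edges, then a late-arriving endpoint of $o$ sees edge $o$ shared poorly against a cheap already-built alternative path, and declines to use it.

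\medskip

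\emph{The key inequalities.} As in Lemma~\ref{lem:endpoints_of_optimum_edge_improving}, the only way edge $o$ enters the built network is if some player, upon arrival, finds that routing through $o$ is cheaper than routing around the optimum side. The first time this happens for a player $\pi(i)$ on the left side of $o$, we obtain an inequality of the form $a_o/(\text{number of }o\text{-users so far}) \le \sum_{l} a_l/(\text{shares already paid})$, analogous to the myopic-sequential-price-of-anarchy inequality $a_o \le \sum_{l=0}^{i} a_l$ but quantitatively tighter because the arrival order guarantees a nontrivial congestion discount and because the competing side has already been partly purchased. I would write down, for each of the first few players who could conceivably route through $o$, the two-sided ``wants/does-not-want to use $o$'' inequalities exactly as the $26/19$ suggests, then take a weighted combination of these inequalities together with the normalization $\sum_{i\neq o}a_i=1$ — the same LP-duality style certificate used to get $\tfrac{4}{3}$ (with weights $5$, $1$, $6$) in Theorem~\ref{thm:pos<=4/3}. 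The unusual denominator $\tfrac{26}{19}$ strongly signals that the authors found the optimal multiplier vector for a small LP over a bounded number of these inequalities; I would set up that LP and read off the certificate.

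\medskip

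\emph{Main obstacle and how I'd handle it.} The hard part is \emph{case analysis over which side of $o$ the first $o$-user comes from, and how many players have already committed to $o$ before the profile freezes}, since a myopic best response can cause a cascade: once edge $o$ is partly bought, a later player may also switch to it because its share has dropped, worsening $x$. I would bound this cascade by observing that after the arrival order is fixed, the set of players using $o$ in the final profile is an ``interval'' of the arrival sequence, and that each additional $o$-user yields a fresh inequality that can only help the dual certificate. The cleanest route is probably to argue — paralleling the ``length of dynamics $\ge 8$'' remark in Theorem~\ref{thm:pos<=4/3} — that only a bounded number $K$ of leading inequalities are needed, verify the $26/19$ bound for each of the finitely many cases $k=1,\dots,K$ by an explicit weighted combination (tabulated, as there, in an appendix), and show that for $k > K$ no further inequalities are required. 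The delicate quantitative step, and where I expect the real work to live, is checking that the chosen arrival order actually forces the congestion discounts assumed in the inequalities; if a naive increasing-distance order fails, I would refine it to break ties so that the two endpoints of $o$, together with the players adjacent to them, are always scheduled after the bulk of their respective sides is built, which is exactly what makes the $\tfrac{7}{19}$ slack achievable.
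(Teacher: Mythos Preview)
Your high-level plan (choose a permutation, extract linear inequalities from the myopic best-response conditions, and certify the bound by a weighted combination with the normalization) is exactly the paper's approach. But two concrete choices you make diverge from the paper, and they are precisely what would make your execution messy.

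First, the permutation. The paper does \emph{not} order players by increasing optimum-tree distance to $t$ (which interleaves the two sides of $o$). Instead, using the WLOG assumption $\sum_{i<o}a_i\ge\sum_{i>o}a_i$, it sends in the \emph{entire lighter side first}, in the order $n{-}1,n{-}2,\ldots,o$, and only then the heavier side in the order $0,1,\ldots,o{-}1$. The point is that every right-side player then trivially goes right (her unshared left path already costs at least the whole right side), so by the time the left-side players start arriving, the right arc is fully built. This kills the ``which side first buys $o$'' case split you were bracing for, and it also kills the cascade worry: once some left-side player $i{+}1$ first prefers right, every later left-side player $i{+}2,\ldots,o{-}1$ a fortiori prefers right (longer unshared left prefix, shorter and more-shared right path), so the resulting network omits exactly edge $i{+}1$.

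Second, and as a consequence, the certificate is far smaller than you anticipate. There is no $k=1,\ldots,K$ case analysis \`a la Theorem~\ref{thm:pos<=4/3}; only \emph{two} myopic inequalities are needed --- ``player $i$ prefers left'' and ``player $i{+}1$ prefers right'' --- and the weights are $\tfrac{2}{19}$, $\tfrac{24}{19}$, together with $\tfrac{26}{19}$ on the normalization. Your framing of the excess as ``$x=$ cost of buying edge $o$'' is also slightly off: the final network has cost $1+a_o-a_{i+1}$, not $1+a_o$, and the two inequalities together control $a_o-a_{i+1}$, not $a_o$ alone. The missing idea, in short, is the one-side-at-a-time permutation; once you have it, the proof is a single case with a three-line LP certificate.
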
 

\begin{proof}
Assume that the optimum strategy profile does not include the edge of cost $a_o$, and without loss of generality $\displaystyle \sum_{i=0}^{i=o-1}a_i\geq \displaystyle \sum_{i=o+1}^{i=n}a_i$.
Consider the permutation $\pi = \{n-1,\ldots, o, 0,1,\ldots,o-1\}$. First $n-o$ players clearly take the right path, by our assumption. Consider the remaining players. If there is no player which, upon arrival, prefers the right path over the left path, then only edges of an optimum strategy profile are included into the resulting network which means that the myopic sequential price of stability is $1$. 
If the very first player $0$ prefers the right path, then all other players necessarily prefer the right path as well, and the resulting network consists of all edges except for that of weight $a_0$. But then $a_0$ is at least as large as $a_o$, resulting again the myopic  sequential price of stability equal to $1$. 
Suppose that there exists $i$ such that every player $l\leq i$ prefers to take the left path, and only the player (vertex) $i+1$ prefers to take the right path. This implies the following inequalities: 
\begin{equation}
\label{i-thplayer}
\displaystyle \sum_{k=0}^{k=i}\frac{a_k}{i-k+1}\leq \displaystyle \sum_{k=i+1}^{k=n}a_k \text{, and}
\end{equation}

\begin{equation}
\label{i+1-thplayer}
\displaystyle \sum_{k=i+2}^{k=o}a_k\leq \displaystyle \sum_{k=0}^{k=i+1}\frac{a_k}{i+2-k} \text{,}
\end{equation} 
where the first inequality (\ref{i-thplayer}) indicates that the $i$-th player prefers the left path, and the second inequality (\ref{i+1-thplayer}) indicates that the $i+1$-th player prefers the right. 
Our goal is to investigate the maximum possible cost $c$ of the resulting network, where $c = a_0+\cdots+a_i+a_{i+2}+\cdots+a_n$. 
Take the first inequality (\ref{i-thplayer}) with weight $\frac{2}{19}$, the second inequality (\ref{i+1-thplayer}) with weight $\frac{24}{19}$, and the normalization equation $a_0+\cdots+a_{o-1}+a_{o+1}+\cdots+a_n = 1$ with weight $\frac{26}{19}$. 
We obtain that the sum on the left hand side $s$ satisfies $c\leq s\leq \frac{26}{19}$, which gives that $c\leq \frac{26}{19}\approx 1.368$. 
\qed
\end{proof}

The permutation from the proof of Theorem~\ref{sequentialPoS} cannot be used to provide a better bound, as there exists an example of a game, where the permutation results in a network of cost $\frac{26}{19}$ times larger than the cost of optimum.
The example consists of $3$ players and edges have weights $\frac{6}{19}$, $\frac{10}{19}$, $\frac{3}{19}$ and $\frac{10}{19}$ in the counter-clockwise order. 
Players who come in the game according to the permutation $\{0,1,2,3\}$ take all edges except for the $3$-rd edge of weight $\frac{3}{19}$, resulting into a network of cost $\frac{26}{19}$, while the optimum network cost is $1$. 
Note that if players come according to the ``opposite'' permutation $(n-1,\ldots, 0)$, then the resulting network has the same cost as the optimum network. 
We have experimentally played with these two permutations, and for all inputs we tried, one of the two permutations resulted in networks of cost no more than the 4/3 of the optimum cost.
Actually, we have checked that there is no instance of at most 1000 players where the better of the two permutations fails in that respect.

\begin{conjecture}
The myopic sequential price of stability in the multicast game on rings is at most $\frac{4}{3}$.
\end{conjecture}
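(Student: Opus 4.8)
The plan is to prove the conjecture by exhibiting a \emph{small, fixed} set of permutations and showing that the cheapest network any of them produces costs at most $\frac43$ times the optimum. The two candidates are exactly those highlighted after Theorem~\ref{sequentialPoS}: the permutation $\pi_1=\{n-1,\ldots,o,0,1,\ldots,o-1\}$ used there (after normalizing the optimum to $1$ and assuming, without loss of generality, that $\sum_{i<o}a_i\ge\sum_{i>o}a_i$), and the reverse permutation $\rho=(n-1,\ldots,1,0)$. First I would re-examine the best-response dynamics of $\pi_1$ and record, as in the proof of Theorem~\ref{sequentialPoS}, that its outcome is either the optimum itself or a network that omits a single edge $e_1\le o-1$ and hence has cost $1+a_o-a_{e_1}$; moreover at the switch the governing inequalities \eqref{i-thplayer}--\eqref{i+1-thplayer} hold, which in particular force $a_o\le\sum_{k\le e_1}a_k\le 1$ whenever a switch occurs (if no left-batch player ever deviates, $\pi_1$ already yields the optimum, and we are done).

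The next step is to give the analogous structural description of $\rho$. Here I would first prove a monotonicity lemma in the spirit of Lemma~\ref{lem:endpoints_of_optimum_edge_improving}: in $\rho$ every player initially prefers the right path, and once some player $j^{*}$ (the largest-indexed deviator) switches to the left path, every later-arriving player $j<j^{*}$ also takes the left path. Granting this, $\rho$ likewise produces either the optimum or a network omitting a single edge $e_2=j^{*}+1$, of cost $1+a_o-a_{e_2}$, together with a pair of best-response inequalities characterizing the switch. With both descriptions in hand, the goal $\min(\mathrm{cost}(\pi_1),\mathrm{cost}(\rho))\le\frac43$ reduces to the clean arithmetic statement $\max(a_{e_1},a_{e_2})\ge a_o-\frac13$: at least one of the two permutations must omit an edge whose weight is within $\frac13$ of the heaviest edge.

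The crux is then to derive $\max(a_{e_1},a_{e_2})\ge a_o-\tfrac13$ from the four best-response inequalities (two per permutation) together with the normalization $\sum_{i\ne o}a_i=1$. I would attack this exactly as in the proofs of Theorem~\ref{thm:pos<=4/3} and Theorem~\ref{sequentialPoS}: treat it as a linear program in the edge weights $a_i$ and produce an explicit nonnegative combination (a dual certificate) of the inequalities that yields the bound. Because the inequalities depend on the relative positions of $e_1$, $e_2$ and $o$, one splits into the finitely many orderings of these three indices; the contribution of the very heavy edge $a_o$ has to be isolated and bounded, just as $a_o$ was isolated in Theorem~\ref{potentialTheorem}.

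I expect the main obstacle to be making this certificate work \emph{uniformly in $n$}. The coefficients arising from the cost-sharing terms are harmonic-type quantities $\tfrac{1}{i+2-k}$ that change with the length of each dynamics, so --- exactly as in the proof of Theorem~\ref{sequentialPoS}, where a separate weight vector was needed for each dynamics length $k=1,\ldots,7$ before a tail argument closed the remaining cases --- one may need a family of certificates indexed by the gap between the two switch positions, together with an argument that only boundedly many gaps are critical and that the long-gap tail is automatically dominated. A genuine risk is that two permutations do not suffice for all $n$ (the authors verified the two-permutation bound only up to $1000$ players), in which case the approach must be upgraded to a structured family of sweeping permutations, and one must bound the best network over the whole family; proving such a uniform bound, rather than a single dual identity, is where the real difficulty lies.
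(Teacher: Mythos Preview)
The statement you are trying to prove is a \emph{conjecture} in the paper, not a theorem; the paper offers no proof, only the remark that the two permutations you single out were checked experimentally to give ratio at most $4/3$ on all instances with up to $1000$ players. So there is no ``paper's proof'' to compare against, and what you have written is a proof \emph{plan}, not a proof.

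Your plan is the natural one suggested by the paper's own evidence, and the structural part is essentially sound. The monotonicity lemma for $\rho$ can indeed be proved by the inductive comparison you sketch: if players $n-1,\ldots,j^*+1$ have gone right and $j^*$ is the first to go left, then the next player $j^*-1$ has left cost $\tfrac12(L_{j^*}-a_{j^*})$ and right cost $\tfrac12 a_{j^*}+R_{j^*}$, so $L_{j^*}\le R_{j^*}$ immediately forces $j^*-1$ left as well, and the argument iterates. The reduction of the target to $\max(a_{e_1},a_{e_2})\ge a_o-\tfrac13$ is also correct.

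The genuine gap is precisely where you locate it: step~3, producing a dual certificate from the switch inequalities plus normalization, \emph{is} the open problem. Two features make this qualitatively harder than the LP arguments in Theorems~\ref{thm:pos<=4/3} and~\ref{sequentialPoS}. First, you are bounding a \emph{minimum} of two linear functionals, which is a disjunction of LPs rather than a single one, so a single nonnegative combination cannot suffice; you need a case split governed by the instance. Second, the switch positions $e_1,e_2$ --- and hence the entire coefficient pattern of your four inequalities --- are themselves determined by the instance, so ``finitely many orderings of $e_1,e_2,o$'' does not reduce the problem to finitely many LPs in the way the $k\le 7$ case analysis did for Theorem~\ref{thm:pos<=4/3}. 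You acknowledge both risks, but nothing in your outline indicates \emph{why} the certificate should exist uniformly in $n$; the paper's authors presumably attempted exactly this and did not succeed, which is why the statement is posed as a conjecture. Until that step is actually carried out, or the two-permutation approach is replaced by a provably stronger family, the conjecture remains open.
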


\section{Conclusions}

We have analyzed several solution concepts for the multicast network design games on rings, and demonstrated that they differ in terms of quality. 
Some of the derived bounds are not shown to be tight, and we leave it for future work to make them tight. 

We have also initiated the study of the myopic sequential price of stability, and analyzed it for the multicast network design game on a ring. 
It is certainly an interesting challenge to provide better bounds on this concept for general (not multicast) network design games.
%
%We think that the methods developed in this paper will help to construct improved lower and upper bounds for other settings as well. We leave the question of studying myopic sequential price of stability for general network design game as a matter of future research.  
\\
\\
\noindent\textbf{Acknowledgements.} This work has been supported by the
Swiss National Science Foundation (SNF) under the grant number
200021\_143323/1. We used the CGAL linear and quadratic programming solver \cite{cgal:fgsw-lqps-15a} for solving all linear programs described in this article. 
We thank anonymous reviewer for insightful comments and especially for pointing out the relation between myopic sequential price of stability and graphical multicast cost sharing games.

\bibliographystyle{llncs2e/splncs03}
\bibliography{NetworkDesignGames}

\newpage
\appendix

\section{ Weights for Inequalities from the Proof of Theorem~\ref{thm:pos<=4/3}}

In this appendix we provide the multiplicative weights of the inequalities using a dual to a linear program that was solved to upper bound the price of stability in the multicast game on rings. 
The first inequality is the normalization inequality, therefore its weight is the upper bound on the price of stability. The next $k$ inequalities indicate that the first $k\leq 7$ players left of edge $e$ prefer to deviate, i.e., prefer to choose the right path instead of the left path, and the last inequality indicates that we have a Nash equilibrium, i.e., the last player considered in the best-response dynamics prefers to stick with the left path than to switch to the right path. 
The objective of the linear program is to minimize the sum of the edge costs without the edge that is not used by the Nash equilibrium achieved via the best response dynamics. 
The coefficients (weights) are as follows:

\begin{itemize}
\item $k=1$ (0: 4/3;  1: 10/9;  2: 2/9)
\item $k=2$ (0: 22/17;  1: 252/323;  2: 202/323;  3: 90/323)
\item $k=3$ (0: 29/23;  1: 2976/4025;  2: 1206/4025;  3: 2256/4025;  4: 1224/4025)
\item $k=4$ (0: 1.243533565;  1: 0.722076586;  2: 446160/1659763;  3: 0.268809463;  4: 0.528169383;  5: 0.329251827)
\item $k=5$ (0: 1.229596836;  1: 0.711037768;  2: 0.257115234;  3: 0.201170436;  4: 0.199302216;  5: 0.50797093;  6: 0.348431623)
\item $k=6$ (0: 1.217310111;  1: 0.702648246;  2: 0.250967669;  3: 0.189905238;  4: 0.168566505;  5: 0.179311025;  6: 0.494134279;  7: 0.362553601) 
\item $k=7$ (0: 1.206536915;  1: 0.69586637;  2: 0.247111078;  3: 0.184286036;  4: 0.157438535;  5: 0.148587957;  6: 0.165607593;  7: 0.484007846;  8: 0.373384452)
\end{itemize}

For $k>7$, we take only the first $7$ inequalities indicating that the first $7$ players prefer to take the right path than to stick to the left path. This is enough to prove an upper bound of $1.33081$ for the price of stability. In the following, we list the weights of the inequalities of the dual to our linear program (index $k:$ denotes the weight of the inequality to player $k$):
(0: 1.330802428;  1: 0.750587484;  2: 0.246845878;  3: 0.168106752;  4: 0.12615003;  5: 0.096800836;  6: 0.072578056;  7: 0.048719834).

\end{document}